\crefname{lstlisting}{listing}{listings}
\crefname{axiom}{}{}
\newcommand\gref[1]{\nameref{#1} \nameCref{#1}}
\renewcommand\gref\Cref
\newcommand\Gref\Cref
\providecommand\optionalSubscript[1]{\ifthenelse{\equal{#1}{}}{}{_{#1}}}
\providecommand\optionalSuperscript[1]{\ifthenelse{\equal{#1}{}}{}{^{#1}}}
\newcommand\fa{A}
\newcommand\ltrue\top
\newcommand\lfalse\bot
\newcommand\limply\to 
\newcommand\quant[2]{#1 #2.\ }
\newcommand\qforall{\quant\forall}
\newcommand\qexists{\quant\exists}
\tikzstyle{label} = [fill=white,inner sep=1pt]
\newcommand\IfNotEmpty[2]{\ifthenelse{\equal{#1}{}}{}{#2}}
\newcommand\PrDer[3]{
\begin{tikzpicture}[baseline=0.3cm,scale=1]
  \IfNotEmpty{#3}{ % branch name
    \draw decorate [decoration={snake,amplitude=2pt,segment length=14pt}] {(0,0.2) --  node[above,label] {\(#3\)} (0,1.25)};
  }
  \draw (-0.5,1) -- (-0.1,0) -- (0.1,0) -- (0.5,1);
  \IfNotEmpty{#1}{ % derivation name
    \node[above,label] () at (0,0.2) {\(#1\)}; 
  } 
  %\IfNotEmpty{#2}{\node[below,label] () at (0,0) {\(#2\)}; }
    
\end{tikzpicture}
}
\newcommand\PrAss[3][]{\AxiomC{\([#2\optionalSuperscript{#1}]^{#3}\)}}
\newcommand\PrAx[2][]{\AxiomC{\(#2\optionalSuperscript{#1}\)}}
\newcommand\PrUn[2][]{\UnaryInfC{\(#2\optionalSuperscript{#1}\)}}
\newcommand\PrBin[2][]{\BinaryInfC{\(#2\optionalSuperscript{#1}\)}}
\newcommand\PrLbl[2][]{\LeftLabel{\(#2\)}\ifthenelse{\equal{#1}{}}{}{\RightLabel{\(#1\)}}}
\newcommand\PrInf[1][]{\ifthenelse{\equal{#1}{}}{%
\def\extraVskip{-2pt}\noLine\UnaryInfC\vdots\noLine\def\extraVskip{2pt}}{%
\noLine\UnaryInfC{\(#1\)}}}
\newcommand\RuleName[3][]{#2\mathrm{#3}\IfNotEmpty{#1}{_\mathrm{#1}}}
\newcommand\PrImplyI[2][]{%
  \PrLbl[#1]{\RuleName\limply{I}}%
  \PrUn{#2}%
}
\newcommand\PrImplyE[2][]{%
  \PrLbl[#1]{\RuleName\limply{E}}%
  \PrBin{#2}%
}
\newcommand\PrExistsE[2][]{%
  \PrLbl[#1]{\RuleName\exists{E}}%
  \PrBin{#2}%
}
\newcommand\PrForallI[2][]{%
  \PrLbl[#1]{\RuleName\forall{I}}%
  \PrUn{#2}%
}
\newcommand\PrForallE[2][]{%
  \PrLbl[#1]{\RuleName\forall{E}}%
  \PrUn{#2}%
}
\theoremstyle{plain}
\newtheorem{theorem}{Theorem} 
\newtheorem{lemma}{Lemma}
\newtheorem*{theorem*}{Theorem}
\title{Interpreting a Classical Geometric Proof with Interactive Realizability}
\author{Giovanni Birolo
\institute{Department of Mathematics\\University of Turin\\Italy}
\email{giovanni.birolo@gmail.com}
}
\begin{document}
\maketitle
\newcommand\HA{\mathsf{HA}} % Heyting Arithmetic

\newcommand\riA{\alpha}
\newenvironment{legend}{\small \it}{}

\newcommand\N{\mathbb N}
\newcommand\Q{\mathbb Q}
\newcommand\R{\mathbb R}
\newcommand\EM{{\mathsf{EM}_1}}

\newcommand\lfa{f}
\newcommand\lva{x}
\newcommand\lvb{y}
\newcommand\na{n}
\newcommand\nb{m}
\newcommand\ia{i}
\newcommand\ib{j}
\newcommand\ic{k}
\newcommand\id{l}

\newcommand\pa{P}
\newcommand\pb{Q}
\newcommand\pc{R}
\newcommand\pd{S}

% pointAt: name, position, label, label positioning 
\newcommand\pointAt[4]{%
  \coordinate (#1) at (#2);
  \draw[fill] (#1) circle (0.05);
  \node[#4] at (#1) {\(#3\)};
}
% lineBetween: point name, point name
\newcommand\lineBetween[2]{%
  \draw [shorten >= -1cm, shorten <=-1cm] (#1)--(#2);
}

\begin{abstract}
We show how to extract a monotonic learning algorithm from a classical proof of a geometric statement by interpreting the proof by means of interactive realizability, a realizability sematics for classical logic. 

The statement is about the existence of a convex angle including a finite collections of points in the real plane and it is related to the existence of a convex hull. 
We define real numbers as Cauchy sequences of rational numbers, 
therefore equality and ordering are not decidable. 
While the proof looks superficially constructive, 
it employs classical reasoning to handle undecidable comparisons between real numbers, 
making the underlying algorithm non-effective. 

The interactive realizability interpretation transforms the non-effective linear algorithm described by the proof into an effective one that uses backtracking to learn from its mistakes. 
The effective algorithm exhibits a ``smart'' behavior, performing comparisons only up to the precision required to prove the final statement. 
This behavior is not explicitly planned but arises from the interactive interpretation of comparisons between Cauchy sequences. 
\end{abstract}

\section{Introduction}

Interactive realizability is a realizability semantics that extends the Brouwer-Heyting-Kolmogorov interpretation to (sub-)classical logic, more precisely to first-order intuitionistic arithmetic (Heyting Arithmetic, \(\HA\)) extended by the law of the excluded middle restricted to \(\Sigma^0_1\) formulas (\(\EM\)), a system motivated by its applications in proof mining. 
It was introduced by Berardi and de'Liguoro in \cite{berardidL08}. 

We use interactive realizability in order to study the computational content of a classical proof of the following geometric statement. 
\begin{theorem*}[Convex Angle] \ 

  \begin{minipage}{0.65\linewidth}
    We have a finite set of at least three points in the real plane \(\R^2\) such that no three points are on the same line. 
    Then there exist distinct points \(\pa, \pb\) and \(\pc\) such that: 
    \begin{itemize}
      \item all other points \(\pd\) are inside \(\widehat{\pb\pa\pc}\), 
      \item the angle \(\widehat{\pb\pa\pc}\) is convex, that is, less than \(\pi\). 
    \end{itemize} 
  \end{minipage}\begin{minipage}{0.30\linewidth}
    \begin{center}
      \begin{tikzpicture}[>=latex] 
        \pointAt{A}{0,0}\pa{below}
        \pointAt{B}{45:2}\pb{right}
        \pointAt{C}{140:1.5}\pc{left}
        \lineBetween{A}{B}
        \lineBetween{A}{C}
        \pointAt{A}{0,0}\pa{below}
        \pointAt{D}{60:1.5}{}{}
        \pointAt{D}{70:2}{}{}
        \pointAt{D}{80:1.5}{}{}
        \pointAt{D}{90:1}{}{}
        \pointAt{D}{100:1.5}{}{}
        \pointAt{D}{110:2}{}{}
        \pointAt{D}{120:1}{}{}
      \end{tikzpicture}
    \end{center}
  \end{minipage}
\end{theorem*}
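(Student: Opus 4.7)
The plan is to describe a one-pass algorithm that, through classically-guarded comparisons of real numbers, identifies a vertex $\pa$ of the convex hull and its two adjacent hull-vertices $\pb, \pc$. First I would pick $\pa$ as a point of the set with minimum ordinate (breaking ties by minimum abscissa). Since ordering of Cauchy reals is not decidable, this ``pick the argmin'' step is the first place where classical reasoning enters: each pairwise comparison $y_i < y_j$ is handled by an instance of $\EM$, and iterating through the finite list yields a well-defined $\pa$ relative to these classical oracles. By construction every other point of the set lies in the closed half-plane above $\pa$.

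With $\pa$ fixed, let $\pd_1, \dots, \pd_{n-1}$ denote the remaining points. I would let $\pb$ minimise and $\pc$ maximise the angle that the vector from $\pa$ to $\pd_i$ makes with the positive horizontal direction. Each argmin/argmax reduces to further sign-of-cross-product tests on reals, again undecidable in general and again handled by invocations of $\EM$. Every other point $\pd$ then has angle between those of $\pb$ and $\pc$, so it lies in $\widehat{\pb\pa\pc}$; and because both extreme angles are in $[0,\pi]$, the angle $\widehat{\pb\pa\pc}$ itself is at most $\pi$, hence convex. Distinctness of $\pa, \pb, \pc$ follows from the no-three-collinear hypothesis: if $\pb$ coincided with $\pc$, every other point would lie on the line $\pa\pb$, producing three collinear points.

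The main obstacle, in view of the intended interactive-realizability reading, will be organising these classical comparisons so that the non-effective argmin/argmax steps on Cauchy reals can later be replayed as an effective algorithm that learns from its mistakes. A naive argmin need never terminate at any finite precision, since two distinct Cauchy reals may agree to arbitrary precision; the proof must therefore present each comparison as an $\EM$-instance whose witness can be revised by backtracking, so that the extracted realizer performs comparisons only to the precision eventually needed by the rest of the argument, matching the \emph{smart} behaviour announced in the abstract.
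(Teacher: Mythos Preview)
Your high-level plan coincides with the paper's: choose \(\pa\) as a point of least \(y\)-coordinate (this is exactly the paper's Least Element Lemma, i.e.\ your first argmin handled by \(\EM\)), then pick the two extreme directions among the remaining points. The paper, however, does \emph{not} take an argmin/argmax of an angle-valued function. It runs an incremental sweep: start with arbitrary candidates \(\pb',\pc'\), and for each further point \(\pd\) use the predicates \(\leftP,\rightP\) either to keep the pair or to replace one of them by \(\pd\). Correctness of each replacement, and impossibility of the ``bad'' case (\(\pd\) simultaneously right of \(\pa\pb'\) and left of \(\pa\pc'\)), are discharged by a separate constructive lemma (the Three Points Lemma): if three points are cyclically to the left of one another around \(\pa\), one of them is strictly below \(\pa\). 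That lemma is the engine of the interactive interpretation: from a local failure of the bounding condition it manufactures a concrete witness that \(\pa\) was \emph{not} lowest, and this counterexample is what extends the knowledge state and triggers backtracking into the least-element computation.

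Your formulation hides exactly this step. ``Let \(\pb\) minimise the angle'' is not directly available: the angle is an arctangent, and the paper's reals (nested rational intervals in \(\HA\)) do not come equipped with it. If instead you implement angle comparison by the sign of the cross product --- which is what \(\leftP/\rightP\) are --- you are no longer minimising a real-valued function to which the Least Element Lemma applies; you are seeking a minimum of a \emph{binary relation}, and for that you must show that \(\leftP(\pa,\cdot,\cdot)\) is a strict total order on the points with \(y \geq y_\pa\). Transitivity (equivalently, acyclicity on triples) is not free: it is precisely the content of the Three Points Lemma. So your sketch is classically fine, but the ingredient that makes it go through in \(\HA+\EM\), and that makes the learning behaviour visible, is the lemma the paper isolates and you omit. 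Your tie-break by abscissa is also unnecessary: the paper never needs a unique lowest point, only a candidate, and the Three Points Lemma handles any later discovery that the candidate was wrong.
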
 

We choose this particular statement because we have a proof of it that looks algorithmic and can be easily visualized. 
\Gref{thm:bounding_angle} can be thought of as weakened version of the existence of the convex hull of a finite set of points. 

As we said the proof we choose as example looks constructive, using only decidability of ordering over real numbers. 
However, it is well known that there is no effective ordering on the real numbers. 
In our encoding of the real numbers, totality of the ordering on the recursive reals is equivalent to \(\EM\). 
Since the proof needs the ordering to be total, it needs \(\EM\). 
Due to the low logical complexity of excluded middle which is used, the proof may be interpreted with a simple case of interactive realizability.

We show how interactive realizability can be applied and what it can tell us about the computational content of the proof. 
What we get is an algorithm that, instead of comparing real numbers, makes an arbitrary guess about which one is smaller.
If later it becomes apparent that the guess is wrong the algorithm retracts the choice it made since it can now make an informed decision about that particular comparison. 
Then the algorithm performs comparisons only when needed and only up to the required precision. 

Thus we see how a simple classical proof which performs comparisons between real numbers is interpreted as a learning algorithm which uses ``educated guesses'' 
in order to avoid non effective operations. 
This non-trivial behavior is not explicit in the classical proof, but follows from the definition of ordering on Cauchy sequences by means of the interactive realizability interpretation. 

In the present work, our main goal is to showcase interactive realizability and the backtracking algorithms it produces through a non-trivial example. 
For this reason, we chose to present interactive realizability as a proof interpretation technique rather than as a realizability semantics, in order to concentrate on the example and its computational interpretation without being bogged down in technical details.  
A more comprehensive treatment of interactive realizability can be found for instance in \cite{aschieriB10}.

\section{Real Numbers}

In this section we present our treatment of real numbers in Heyting Arithmetic. For a more in depth treatment of real numbers from a constructive view point see \cite{troelstraVD88}.

% rational numbers
\newcommand\eqQ{=_\Q}
\newcommand\ltQ{<_\Q}
\newcommand\gtQ{>_\Q}
\newcommand\leQ{\le_\Q}
\newcommand\geQ{\ge_\Q}
\newcommand\Qplus{+_\Q}
\newcommand\Qminus{-_\Q}
\newcommand\Qprod{\cdot_\Q}
\newcommand\Qzero{0_\Q}
\newcommand\lqa{q}
\newcommand\lqb{p}
There are many ways of encoding integer and rational numbers in \(\HA\) and defining primitive recursive operations and predicates on them. 
In the following we assume that we have any such encoding and that we have decidable equality \(\eqQ\) and ordering \(\ltQ, \leQ\) and effective operations \(\Qplus, \Qprod\).
We use the variables \(\lqa\) and \(\lqb\) for rationals. 

%\usetikzlibrary{matrix}
%\begin{tikzpicture}
%  \matrix(q) [matrix of math nodes] {
%    \frac{0}{1} & \frac{1}{1} & \frac{2}{1} & \frac{3}{1} \\
%    \frac{0}{2} & \frac{1}{2} & \frac{2}{2} & \frac{3}{2} \\
%    \frac{0}{3} & \frac{1}{3} & \frac{2}{3} & \frac{3}{3} \\
%    \frac{0}{4} & \frac{1}{4} & \frac{2}{4} & \frac{3}{4} \\
%  };
%  \path
%  (q-1-1.south west) 
%  edge (q-1-1.north east)
%  edge [dotted] (q-1-2.south west)
%  edge (q-2-1.north east)
%  edge [dotted] (q-1-3.south west)
%  edge (q-3-1.north east);
%\end{tikzpicture}

%\subsection{Cauchy Sequences}
There are many equivalent ways of defining the real numbers from the rational numbers. 
The best known are the definition of the reals as equivalence classes of Cauchy sequences and as Dedekind cuts.
We follow the first approach. 

\newcommand\lrea{r}
\newcommand\lreb{s}
\newcommand\lrec{t}
\newcommand\lpa{k}
\newcommand\lpb{l}
A sequence of rationals \( \lrea : \N \to \Q \) is a \emph{Cauchy sequence} if the following holds:
\begin{equation} \label{eq:classical_cauchy}
  \qforall\lpa \qexists{\lpa_0} \qforall{\lpa_1, \lpa_2}
  | \lrea(\lpa_0 + \lpa_2) - \lrea(\lpa_0 + \lpa_1) | < \frac{1}{2^\lpa}. 
\end{equation}
While this sequence approximates a real number, it can do so very slowly. 
By means of classical reasoning, 
we can show that, from any Cauchy sequence, we can extract a fast-converging monotone sub-sequence. 
For this reason, instead of general Cauchy sequences, we can consider sequences of nested intervals with rational extremes whose length decreases exponentially. 
An interval is determined by its extremes, so we represent a sequence of intervals as a couple of sequences of rationals \(\lrea^-, \lrea^+\), representing the lower and higher extremes of the intervals respectively.
Then we require that \(\lrea^-\) is increasing and \(\lrea^+\) is decreasing (since the intervals are nested), that \(\lrea^-(\lpa)\) is lesser than or equal to \(\lrea^+(\lpa)\) (since they are the lower and higher extremes of a same interval) and their difference is smaller than \(2^{-\lpa}\). 
More precisely we say that \(\lrea^-\) and \(\lrea^+\) represent a real number when they satisfy the following condition, written as a \(\Pi^0_1\) formula:
\begin{equation} \label{eq:nested_cauchy}
  \begin{split}
    \qforall\lpa &(\lrea^-(\lpa) \leQ \lrea^+(\lpa)) \land (\lrea^-(\lpa) \leQ \lrea^-(\lpa+1)) \land \\ 
                 &\land (\lrea^+(\lpa) \geQ \lrea^+(\lpa+1)) \land (\lrea^+(\lpa) \Qminus \lrea^-(\lpa) \leQ 2^{(-\lpa)}).
  \end{split}
\end{equation} 
While the choice of the specific definition of real number is somewhat arbitrary, it is significant because it affects the logical properties (in particular the degree of undecidability) of the ordering on the reals. 

%\subsection{Order Predicate}
\newcommand\op{\text{OP}}
\newcommand\RuleNameOP[1]{\op\text{-{#1}}}

Now we can define an ``order predicate'' \(\op(\lrea, \lreb, \lpa)\), which can be thought of as a family of strict partial orders on the real numbers indexed by natural number \(\lpa\). 
More precisely, it is a formula that determines when the sequence of nested intervals \(\lrea\) is strictly lesser than \(\lreb\), at precision \(\lpa\). 
This happens when, at \(\lpa\), the higher extreme of an interval is strictly greater than the lower extreme of the other. Then, from that point forward, the intervals will be forever disjoint, since they are nested sequences. 
This allows us to write the order predicate as the formula:
\begin{equation} \label{eq:nested_op}
  \op(\lrea, \lreb, \lpa) \equiv \lrea^+(\lpa) \ltQ \lreb^-(\lpa), 
\end{equation} 
which is decidable in \(\lrea\) and \(\lreb\). 
Note that the definition of \(\op\) depends on that of real number. If we had used the classical definition of Cauchy sequence the order predicate would be the following \(\Pi_1^0\) formula:
\begin{equation} \label{eq:cauchy_op}
  \op'(\lrea, \lreb, \lpa) \equiv \qforall{\lpb} \lpb \ge \lpa \limply \lrea(\lpb) \ltQ \lrea(\lpb). 
\end{equation} 
This is very significant for our purposes: 
the order predicate in \eqref{eq:nested_op} is decidable in \(\lrea\) and \(\lreb\) (since the order on the rationals is), while in \eqref{eq:cauchy_op} it is only \emph{negatively decidable}. 
This means that we have an effective method to decide \eqref{eq:cauchy_op} when it is false, but not when it is true. 
%More precisely
% \eqref{eq:cauchy_op} is false we can look for a counterexample, that is, a natural number \(\lpb \ge \lpa\) such that \( \lrea(\lpb) \ltQ \lrea(\lpb) \).

We need \(\op\) to satisfy the following properties, written as rules:
\begin{equation} \label{eq:op_properties}
  \begin{array}{cc}
    \PrAx{\op(\lrea, \lreb, \lpa)}
    \PrLbl{\RuleNameOP{mon}}
    \PrUn{\op(\lrea, \lreb, \lpa+1)}
    \DisplayProof
    &
    \PrAx{\op(\lrea, \lrea, \lpa)}
    \PrLbl{\RuleNameOP{irrefl}}
    \PrUn\lfalse
    \DisplayProof
    \\[6mm]
    \PrAx{\op(\lrea, \lreb, \lpa)}
    \PrAx{\op(\lreb, \lrea, \lpb)}
    \PrLbl{\RuleNameOP{asym}}
    \PrBin\lfalse
    \DisplayProof
    &
    \PrAx{\op(\lrea, \lreb, \lpa)}
    \PrAx{\op(\lreb, \lrec, \lpb)}
    \PrLbl{\RuleNameOP{trans}}
    \PrBin{\op(\lrea, \lrec, \max(\lpa,\lpb))}
    \DisplayProof
  \end{array}
\end{equation}
The \(\RuleNameOP{mon}\) rule expresses a monotonicity property: when an comparison at a given precision can distinguish two approximations, then comparisons at greater precision should too. 
The other rules correspond to the standard axioms for a strict partial order: irreflexivity, asymmetry and transitivity. 

We verify that our definition of \(\op\) satisfies these properties. 
\begin{lemma} \label{thm:op_properties}
  The order predicate \(\op\) defined by \eqref{eq:nested_op} satisfies the properties given in \eqref{eq:op_properties}.
\end{lemma}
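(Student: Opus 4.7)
The plan is to unfold the definition $\op(\lrea, \lreb, \lpa) \equiv \lrea^+(\lpa) \ltQ \lreb^-(\lpa)$ and derive each of the four rules as a short chain of decidable rational inequalities, using only three conjuncts of \eqref{eq:nested_cauchy}: $\lrea^-(\lpa) \leQ \lrea^+(\lpa)$, the monotonicity of $\lrea^-$, and the anti-monotonicity of $\lrea^+$. The exponential-shrinking clause plays no role here; it only becomes relevant when one wants information about limits.

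First I would dispatch the two unary rules. For \RuleNameOP{mon}, chain the assumption $\lrea^+(\lpa) \ltQ \lreb^-(\lpa)$ with $\lrea^+(\lpa+1) \leQ \lrea^+(\lpa)$ and $\lreb^-(\lpa) \leQ \lreb^-(\lpa+1)$ to conclude $\lrea^+(\lpa+1) \ltQ \lreb^-(\lpa+1)$. For \RuleNameOP{irrefl}, the hypothesis $\op(\lrea, \lrea, \lpa)$ expands to $\lrea^+(\lpa) \ltQ \lrea^-(\lpa)$, in direct contradiction with $\lrea^-(\lpa) \leQ \lrea^+(\lpa)$.

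The two binary rules follow the same recipe: iterate the already proved \RuleNameOP{mon} rule to lift both hypotheses to the common precision $\na = \max(\lpa, \lpb)$, and then chain the resulting rational inequalities. For \RuleNameOP{asym} this yields the circular chain $\lrea^-(\na) \leQ \lrea^+(\na) \ltQ \lreb^-(\na) \leQ \lreb^+(\na) \ltQ \lrea^-(\na)$, which contradicts irreflexivity of $\ltQ$ on $\Q$; for \RuleNameOP{trans} the chain $\lrea^+(\na) \ltQ \lreb^-(\na) \leQ \lreb^+(\na) \ltQ \lrec^-(\na)$ gives $\op(\lrea, \lrec, \na)$. No substantive obstacle arises: every step reduces to a decidable rational comparison, and the only bookkeeping is the monotone lifting to a common index, a straightforward induction on $\na - \lpa$ (and $\na - \lpb$) using \RuleNameOP{mon}.
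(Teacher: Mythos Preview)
Your proposal is correct and matches the paper's approach: the paper in fact omits the proof entirely, remarking only that the properties follow directly from the definition of \(\op\) in \eqref{eq:nested_op} and the nested-interval conditions \eqref{eq:nested_cauchy}, which is precisely the derivation you spell out. Your observation that the exponential-shrinking clause is not needed here is accurate and worth noting.
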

\begin{proof} 
  Omitted. The properties follow directly from the definition of \(\op\) as \eqref{eq:nested_op} and from our representation of real number as sequences of nested intervals \eqref{eq:nested_cauchy}. 
\end{proof} 

%\subsection{Order and Equality on the Real Numbers}

\newcommand\eqR{=_\R}
\newcommand\neqR{\neq_\R}
\newcommand\ltR{<_\R}
\newcommand\gtR{>_\R}
\newcommand\leR{\le_\R}
\newcommand\geR{\ge_\R}

We can now define order and equality on the reals.
It is noteworthy that, while we define order and equality in terms of \(\op\),  we never use the definition of \(\op\) itself in proving their properties. 
We only need the properties of \(\op\) we proved in \Cref{thm:op_properties}, 
thus we could proceed in the same way even if we had defined \(\op\) differently, as long as \Cref{thm:op_properties} holds. 

They are defined as follows:
\begin{align*}
  \lrea \ltR \lreb &\equiv \qexists\lpa \op(\lrea, \lreb, \lpa), &
  \lrea \leR \lreb &\equiv \qforall\lpa \lnot \op(\lreb, \lrea, \lpa), \\ 
  \lrea \neqR \lreb &\equiv \qexists\lpa \op(\lrea, \lreb, \lpa) \lor \op(\lreb, \lrea, \lpa), &
  \lrea \eqR \lreb &\equiv \qforall\lpa \lnot \op(\lrea, \lreb, \lpa) \land \lnot \op(\lreb, \lrea, \lpa). 
\end{align*}
Note that \(\ltR\) and \(\neqR\) are \(\Sigma^0_1\) formulas and \(\leR\) and \(\eqR\) are \(\Pi^0_1\) formulas. 

In order to prove \gref{thm:least_element}, which is needed in the proof of \gref{thm:bounding_angle}, 
we need to show some of the properties of the order \(\leR\). 
\begin{lemma}[Reflexivity, Semi-Transitivity and Totality of \(\leR\)]
  The following properties hold:
  \begin{align}
    \tag{reflexivity}\label[property]{prop:Rrefl} 
    \lrea \leR \lrea \\ 
    \tag{semi-transitivity}\label[property]{prop:Rtrans} 
    \lrea \ltR \lreb \land \lreb \leR \lrec \limply \lrea \leR \lrec, \\
    \tag{totality}\label[property]{prop:Rtot}
    \lrea \leR \lreb \lor \lreb \ltR \lrea. 
  \end{align}
\end{lemma}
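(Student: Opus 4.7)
The plan is to reduce each of the three properties to the abstract axioms for $\op$ collected in \gref{thm:op_properties}, unfolding only the definitions of $\ltR$ and $\leR$ in terms of $\op$ and never opening the definition of $\op$ itself. As the excerpt already emphasizes, this keeps the argument uniform in whichever concrete representation of the reals one chooses.

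For reflexivity I would fix $\lpa$, assume $\op(\lrea, \lrea, \lpa)$, and derive $\lfalse$ by a single application of $\RuleNameOP{irrefl}$. For semi-transitivity, I would unfold the two hypotheses: $\lrea \ltR \lreb$ supplies a witness $\lpa_0$ with $\op(\lrea, \lreb, \lpa_0)$, while $\lreb \leR \lrec$ supplies $\lnot \op(\lrec, \lreb, \lpa)$ for every $\lpa$. To conclude $\lrea \leR \lrec$, I would then fix $\lpa$, assume $\op(\lrec, \lrea, \lpa)$ for contradiction, combine it with $\op(\lrea, \lreb, \lpa_0)$ via $\RuleNameOP{trans}$ to obtain $\op(\lrec, \lreb, \max(\lpa, \lpa_0))$, and discharge this against the second hypothesis instantiated at $\max(\lpa, \lpa_0)$. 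Both steps are purely intuitionistic bookkeeping.

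The main step is totality. Unfolded, it reads
\[ \qforall\lpa \lnot \op(\lreb, \lrea, \lpa) \;\lor\; \qexists\lpa \op(\lreb, \lrea, \lpa), \]
which is $\lnot P \lor P$ for the $\Sigma^0_1$ formula $P \equiv \qexists\lpa \op(\lreb, \lrea, \lpa)$: the matrix $\op$ is quantifier-free and decidable on the rational extremes of the nested intervals by \eqref{eq:nested_op}. I would therefore conclude immediately by invoking $\EM$. This is the only step that genuinely requires classical reasoning, and it is precisely the non-effective comparison that the interactive realizability interpretation will later replace by a guess-and-backtrack strategy; so while trivial proof-theoretically, it carries essentially all of the computational content of the lemma.
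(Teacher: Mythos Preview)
Your proposal is correct and follows essentially the same route as the paper: reflexivity via $\RuleNameOP{irrefl}$, semi-transitivity by combining the assumed $\op(\lrec,\lrea,\lpa)$ with the witness $\op(\lrea,\lreb,\lpa_0)$ through $\RuleNameOP{trans}$ and discharging against the $\leR$ hypothesis at $\max(\lpa,\lpa_0)$, and totality recognized as a literal instance of $\EM$. The paper omits the reflexivity argument and presents semi-transitivity as a proof tree, but the underlying steps and the observation that only totality is non-constructive are identical to yours.
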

\begin{proof}
  The first two properties follows from the corresponding properties of \(\op\). 
  The last is a classical tautology. 
  \begin{itemize}
    \item
      We omit the proof of reflexivity for reasons of space and irrelevance. 
    \item
      In order to prove this transitive property for mixed \(\ltR\) and \(\leR\) we have to show that \( \lrea \leR \lrec \equiv \qforall\lpa \lnot \op(\lrec, \lrea, \lpa),\) 
        assuming 
        \( \lrea \ltR \lreb \equiv \qexists\lpa \op(\lrea, \lreb, \lpa) \) and 
        \(\lreb \leR \lrec \equiv \qforall\lpa \lnot \op(\lrec, \lreb, \lpa)\).
      This follows by means of the \RuleNameOP{trans} rule: 
      \[
        \PrAx{\qexists\lpa \op(\lrea, \lreb, \lpa)}
        \PrAx{\qforall\lpa \lnot \op(\lrec, \lreb, \lpa)}
        \PrForallE{\lnot \op(\lrec, \lreb, \max(\lpa,\lpb))}
        \PrAss{\op(\lrec, \lrea, \lpa)}{1}
        \PrAss{\op(\lrea, \lreb, \lpb)}{2}
        \PrLbl{\RuleNameOP{trans}}
        \PrBin{\op(\lrec, \lreb, \max(\lpa,\lpb))}
        \PrLbl{\RuleNameOP{asym}}
        \PrImplyE\lfalse
        \PrExistsE[2]\lfalse
        \PrImplyI[1]{\lnot \op(\lrec, \lrea, \lpa)}
        \PrForallI{\qforall\lpa \lnot \op(\lrec, \lrea, \lpa)}
        \DisplayProof
      \]
    \item
      When \(\lrea\) and \(\lreb\) denote recursive real numbers, 
      totality is an instance of \(\EM\): 
      \[ 
        \lrea \leR \lreb \lor \lreb \ltR \lrea \equiv
        \qforall\lpa \lnot \op(\lreb, \lrea, \lpa) \lor \qexists\lpa \op(\lrea, \lreb, \lpa). \qedhere
      \]
  \end{itemize} 
\end{proof}
The proof is constructive apart from the last point, where we show that totality is actually an instance of \(\EM\). 
Note that only the reflexivity property is stated in the standard way, while transitivity and totality are written in non-standard forms.  
We chose these forms for two reasons: they are easier to prove and they are the exact forms we need in the proof of \gref{thm:least_element}. 

%\subsection{Variables for Real Numbers}

Until now we have used \(\lrea, \lreb\) and \(\lrec\) as metavariables for real numbers in an informal way. 
However, since we are working in the first-order language of arithmetic, our variables range only on natural numbers and not on functions. 
For our example we only need to address a finite but arbitrary number of real numbers, that is, we only need a countable quantity of them. 
Thus we can assume that we have a countable set of pairs of function symbols indexed by the natural numbers, say \( (\lfa^+_\na, \lfa^-_\na)_{\na \in \N}\). 
We assume that each pair satisfies the convergence condition \eqref{eq:nested_cauchy} and thus represents a real number. 
Then, \(\op\) can be formally defined as \( \lfa^+_\ia(\lpa) \ltQ \lfa^-_\ib(\lpa) \) where \(\ia\) and \(\ib\) are arithmetic terms. 
Thus each real numbers is represented by a natural number, namely its index. 
For convenience and consistency with the standard notation for real numbers, instead of writing \( \ia \ltR \ib \), we use the sugared version \(\lrea_\ia \leR \lrea_\ib\).

%\subsection{The Least Element Lemma}

Now we can reason about finite sets of real numbers as sets of indexes. 
In the next lemma, we shall work with the sets of real numbers indexed by initial segments of the natural numbers. 
We show the existence of a least element in each of these sets. 
The least element is actually a minimum, that is, the unique least element of the set. 
However, in order to prove \gref{thm:bounding_angle}
we do not need to show its uniqueness, just its existence. 
\begin{lemma}[Least Element] \label{thm:least_element}
  For any \(\na\), the real numbers \(\lrea_0, \dotsc, \lrea_\na\) have a least element with respect to \(\leR\). 
  More precisely: 
  % \footnote{
  %   We use the standard compact notation for bounded quantifications: 
  %   \begin{gather*}
  %     \qforall{\ib \le \na} \fa \text{ stands for } \qforall\ib \ib \le \na \limply \fa, \\ 
  %     \qexists{\ib \le \na} \fa \text{ stands for } \qexists\ib \ib \le \na \land \fa. 
  %   \end{gather*}
  % }
  \[
    \qforall\na \qexists{\ia \le \na} \qforall{\ib \le \na} \lrea_\ia \leR \lrea_\ib. 
  \]
\end{lemma}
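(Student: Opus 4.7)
The plan is to prove this by induction on $\na$, using the three properties of $\leR$ established in the previous lemma. The classical content sits entirely in the appeal to totality inside the inductive step, which is exactly the place where the algorithm has to ``guess'' whether the newly considered real number beats the current candidate minimum.

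For the base case $\na = 0$, I would take $\ia = 0$; the only $\ib \le 0$ is $0$ itself, and reflexivity gives $\lrea_0 \leR \lrea_0$. For the inductive step, assume the claim holds at $\na$, so we have some $\ia \le \na$ with $\lrea_\ia \leR \lrea_\ib$ for every $\ib \le \na$. To extend this to $\na+1$, I apply totality to the pair $\lrea_\ia, \lrea_{\na+1}$, obtaining either $\lrea_\ia \leR \lrea_{\na+1}$ or $\lrea_{\na+1} \ltR \lrea_\ia$. In the first disjunct, the old witness $\ia$ still works: for $\ib \le \na$ it is given by the induction hypothesis, and for $\ib = \na+1$ it is precisely the new information. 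In the second disjunct, I pick the new index $\na+1$ as witness; reflexivity covers $\ib = \na+1$, while for $\ib \le \na$ I chain $\lrea_{\na+1} \ltR \lrea_\ia$ with the hypothesis $\lrea_\ia \leR \lrea_\ib$ and invoke the semi-transitivity property stated in the lemma to obtain $\lrea_{\na+1} \leR \lrea_\ib$.

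The structure of the proof is entirely standard, so the ``main obstacle'' is not a mathematical difficulty but a logical/computational one: the disjunction obtained from totality is not effective, because it is an instance of $\EM$ applied to the comparison of two Cauchy sequences. Classically the case split is immediate, but algorithmically, deciding which branch to take would require fully comparing the two real numbers, which cannot be done in finite time. This is precisely the non-constructive step that the interactive realizability interpretation will later need to handle by allowing the algorithm to make a tentative guess and revise it upon encountering a counterexample at some finite precision.

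I would organise the written proof as a short induction with an explicit case analysis on the result of $\lrea_\ia \leR \lrea_{\na+1} \lor \lrea_{\na+1} \ltR \lrea_\ia$, citing \cref{prop:Rrefl}, \cref{prop:Rtrans} and \cref{prop:Rtot} by name so that the single use of classical reasoning is clearly localised to one line, making it easy for the subsequent sections to point at it when extracting the backtracking algorithm.
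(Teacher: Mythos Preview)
Your proof is correct and follows essentially the same approach as the paper: induction on \(\na\), with the base case handled by reflexivity and the successor case split via totality into keeping the old witness or replacing it by \(\na+1\), the latter justified through reflexivity and semi-transitivity. Your explicit localisation of the single classical step to the totality disjunction is exactly what the paper does as well.
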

\begin{proof}
  We proceed by induction on \(\na\). 
  \begin{description}
    \item[Zero case] 
      In the base case \(\na = 0\) and we have to prove that \( \qexists{\ia \le 0} \qforall{\ib \le 0} \lrea_\ia \leR \lrea_\ib \).
      Both \(\ia\) and \(\ib\) can only be \(0\); 
      thus we just have to check the condition \( \lrea_0 \leR \lrea_0 \), 
      which holds by reflexivity of \(\leR\). 
    \item[Successor case]
      In the inductive case we have to prove that 
      \(
        \qexists{\ia \le \na+1} \qforall{\ib \le \na+1} \lrea_\ia \leR \lrea_\ib
      \).
      By the inductive hypothesis, let \( \bar\ia \le \na \) be the index of the least element in \(\lrea_0, \dotsc, \lrea_\na\). 
      By totality of \(\leR\) we have two cases.
      \begin{description}
        \item[\( \lrea_{\bar\ia} \leR \lrea_{\na+1} \)]
          Then \(\bar\ia\) is the index of a least element in \( \lrea_0, \dotsc, \lrea_{\na+1} \),
          since \( \lrea_{\bar\ia} \leR \lrea_\ib \) when \(\ib = \na+1\) (since we are considering this case) and when \(\ib \le \na\) by inductive hypothesis. 
        \item[\( \lrea_{\na+1} \ltR \lrea_{\bar\ia} \)]
          Then \(\na+1\) is the index of a least element in \(\lrea_0, \dotsc, \lrea_{\na+1} \), 
          since \( \lrea_{\na+1} \leR \lrea_\ib\) when \(\ib = \na+1\) by reflexivity of \(\leR\) and when \(\ib \le \na\) by transitivity of \(\ltR\) and \(\leR\), since 
          \(
            \lrea_{\na+1} \ltR \lrea_{\bar\ia} \leR \lrea_\ib
          \)
          by inductive hypothesis. \qedhere
      \end{description}
  \end{description}
\end{proof}

\newcommand\prog[1]{``\lstinline"#1"''}
The proof looks constructive: its computational interpretation is the usual algorithm that finds the least element in a vector, by a simple recursion or by looping on its elements.
%We begin by taking the first element of the vector as a tentative least element and we compare it with the next element, replacing it 
We can write it as a recursive function \prog{rmin} in Haskell: 
\begin{lstlisting}[caption=The Least Element Program,label=prog:min]
rmin 0   = 0
rmin n   = if rle (rmin (n-1)) n
  then rmin (n-1)
  else n
\end{lstlisting}
where \prog{rle} is a boolean function that stands for \(\leR\), that is, it compares the reals indexed by its arguments. 
The problem is that we are unable to write \prog{rle} as a terminating program.
The closest approximation would be the following unfounded recursion:
\begin{lstlisting}[caption=The Lesser or Equal Program,label=prog:rle]
rle i j        = rle_urec 0 i j
rle_urec k i j = if op j i k end 
  then False
  else rle_urec (k+1) i j
\end{lstlisting}
where \prog{op} is a total boolean function that stands for the order predicate \(\op\). 
We can assume that  \prog{op} terminates for any input since \(\op\) is decidable. 
The problem is that \(\leR\) is total only classically. 
More precisely, totality is an instance of \(\EM\) because \(\leR\) is a \(\Pi^0_1\) formula and thus negatively decidable.
This can bee seen concretely in the program for \prog{rle}: 
\prog{rle i j} only halts (returning \prog{False}) if \prog{op j i k} is true for some \(k\), that is, if and only if \(\lrea_\ia \leq \lrea_\ib\) is false. 
On the other hand, when \(\lrea_\ia \leq \lrea_\ib\) is true there is no such \(k\) and the evaluation of \prog{rle i j} will never halt: 
\prog{True} does not even appear in the program.  
This is the general behavior of an algorithm that computes a negatively decidable predicate: when the predicate is false it halts with the correct answer and when the predicate is true it does not halt. 

For positively decidable predicates we have the dual behavior. 
For instance, in the case of \(\ltR\) which is defined by a \(\Sigma^0_1\) formula and thus positively decidable, the decision procedure can be written as: 
\begin{lstlisting}[caption=The Lesser Than Program,label=prog:rlt] 
rlt i j        = rlt_urec 0 i j
rlt_urec k i j = if op i j k  
  then True
  else rlt_urec (k+1) i j
\end{lstlisting}
The program is very similar to the previous one, the only noteworthy changes are the order of the argument given to \prog{op} and the fact that the only possible return value is \prog{True} instead of \prog{False}. 
It only halts (returning \prog{True}) if \prog{op i j k} is true for some \(k\), that is, when \(\lrea_\ia \leq \lrea_\ib\) is true.

\section{The Interactive Interpretation of the Least Element Lemma} 

We have seen why the naive way of extracting a program from proofs fails in the case of \gref{thm:least_element}. 
Now we give the interactive interpretation of \gref{thm:least_element}. 
  Since we are working in \(\HA+\EM\), any proof can be thought of as a constructive proof with open assumptions, that are the instances of \(\EM\) that are used in the proof.
  The interactive realizability interpretation follows the standard BHK interpretation for the constructive parts, so we will concentrate on the interpretation of the \(\EM\) instances. 

%The interactive interpretation is guided by the classical parts of the proof, namely, by the \(\EM\) instances. 
The only instances of \(\EM\) in the proof are those used to deduce the totality property:
\begin{equation} \label{eq:em_tot_instance} 
  \lrea_\ia \leR \lrea_\ib \lor \lrea_\ib \ltR \lrea_\ia. 
\end{equation}
The left disjunct, which we call the \emph{universal disjunct}, is \(\Pi^0_1\) and negatively decidable, while the right one, the \emph{existential disjunct} is \(\Sigma^0_1\) and positively decidable. 
Moreover universal disjunct and negation of the existential disjunct are classically equivalent. 
We say that a formula is \emph{concrete} when it is closed and atomic. 

A naive attempt to give a computational content to an \(\EM\) instance fails, because in general \(\EM\) instances are undecidable. 
Interactive realizability proposes a way to side-step this problem. 
This is possible since it is not true that the computational interpretation of a proof using instances of \(\EM\) necessarily needs to decide these instances. 
Consider the case of totality of the order on the real numbers. 
The universal disjunct is \(\lrea_\ia \leR \lrea_\ib \equiv \qforall\lpa \lnot \op(\lrea_\ib, \lrea_\ia, \lpa)\). 
Being an universally quantified statement, it proves infinite instances \(\lnot \op(\lrea_\ib, \lrea_\ia, \lpa)\), one for each natural number \(\lpa\). 
A proof that uses totality may need all this infinite information or (for example, when proving a simply existential statement) may only need a finite quantity of these instances. 
In the second case, we can avoid the problem of effectively deciding the \(\EM\) instance. 
We only need to decide those instances that are actually used in the proof. 
This is possible, since each instance is decidable (being a quantifier free formula) and we assumed there is a finite quantity of them. 
Interactive realizability takes advantage of this fact and gives a procedure to determine which instances of the universal disjunct are needed and to iteratively decide them. 

% comparison with BHK interpretation
The interactive interpretation is a ``relaxation'' of the BHK interpretation. 
%it decides disjunctions and produce witness for existential quantifications. 
In the BHK interpretation the decision of a disjunction effectively selects a true disjunct, in the interactive case instead of a decision we have a sort of ``educated guess''. 
Therefore, while \(\EM\) cannot be realized by the BHK interpretation since there is no effective procedure to decide it, the interactive interpretation can because it yields a weaker semantics, which produces a sure result only when the goal is simply existential. 

% knowledge states
Interactive realizability revolves around the concept of \emph{knowledge state}. 
A knowledge state, or simply state, is a finite object that stores information about the \(\EM\) instances we use in the proof. 
The purpose of this information is to help us decide the \(\EM\) instances, that is, help us in choosing which disjunct holds. 
Moreover, whenever the state chooses the existential disjunct, it should also produce a witness, like in the BHK interpretation. 

% state representation
We can represent a state as a finite partial function\footnote{By finite partial function, we mean a partial function whose domain (the set of elements where it is defined) is finite.} that maps a concrete instance of \(\EM\) into a witness of its existential disjunct. 
Such a function decides or guesses a concrete instance \(\fa\) of \(\EM\): if it is undefined on \(\fa\), then we choose the universal disjunct; if it is defined we chose the existential disjunct with the returned witness. 
We are only interested in the instances appearing in the proof, namely, those of the form \eqref{eq:em_tot_instance} when \(\ia, \ib\) are numerals. 
Thus an instance is determined by two natural numbers; since witnesses are natural numbers too, a state can be concretely defined as a finite partial function from \(\N \times \N\) to \(\N\). 

For instance, consider the case of the \(\EM\) instances used in the proof of \gref{thm:least_element}. 
When we have to decide \eqref{eq:em_tot_instance}, we check the state on the pair \((\ia,\ib)\). 
At first, let us assume that the state is undefined on \((\ia,\ib)\). 
This means we have no knowledge about the universal disjunct \(\lrea_\ia \leR \lrea_\ib\).
Since we cannot effectively check that the universal disjunct holds, 
we make an educated guess and assume that \(\lrea_\ia \leR \lrea_\ib\) is true. 
Clearly this assumption could very well be wrong, which may or may not become apparent later in the proof. 
Keeping track of this assumption, we carry on with the proof. 
Every time we use this assumption to prove a decidable instance of its we check if the instance holds. 
More concretely, if later in the proof we use the assumption \(\lrea_\ia \leR \lrea_\ib\) to deduce that \(\lnot \op (\ib, \ia, \lpa)\) for some \(\lpa\), we check that \(\lnot \op (\ib, \ia, \lpa)\) holds. 
If this is the case, we carry on with the proof: \(\lrea_\ia \leR \lrea_\ib\) could still be false, but at least the particular instance we are using is true. 
If this is not the case, we have found a counterexample to the assumption \(\lrea_\ia \leR \lrea_\ib\): 
being negatively decidable, the counterexample is enough to effectively decide that it is false. 
Therefore we stop following the proof because we have chosen the wrong disjunct in the \(\EM\) instance \eqref{eq:em_tot_instance}. 

Moreover, a counterexample to \(\lrea_\ia \leR \lrea_\ib\) is a natural number \(\lpa\) such that \(\op (\ib, \ia, \lpa)\). 
Therefore \(\lpa\) is a witness for the existential disjunct \(\lrea_\ib \ltR \lrea_\ia\). 
We can use this new knowledge to add \((\ia,\ib)\) to the domain of the state with value \(\lpa\). Remember that we assumed the state to be undefined on \((\ia,\ib)\), which is why we assumed the universal disjunct to be true in the first place. 

At this point, we forget what we did after guessing (wrongly) that the universal disjunct was true and start again. 
More precisely, we need to backtrack to a computation state \emph{before} we decided the \(\EM\) instance in question and repeat our decision with the extended state. 
Since the extended state is defined on \((\ia,\ib)\) and yields \(\lpa\), this time we decide the \(\EM\) instance differently: we choose the existential disjunct \(\lrea_\ib \ltR \lrea_\ia\) with \(\lpa\) as witness. 
Now we are sure that our choice is the correct one and not a guess, since we have effectively decided that the existential disjunct holds (we can since it is positively decidable). 

In order for the interactive interpretation to produce correct results,
we need to assume that the state is sound, that is, when it is defined, the witness it yields is actually a witness. 
More formally, a state \(s\) is sound if, for any pair \((\ia,\ib)\), we have that \(\op(\ib, \ia, s (\ia,\ib))\) holds. 
This assumption is not problematic: the empty state, namely the state that is always undefined, satisfies it vacuously. 
Moreover, in the interactive interpretation we outlined above, we only extend a state with an actual witness. 
In other words, the extension preserves the soundness property. 

To summarize, the general procedure is the following: 
\begin{enumerate} 
  \item we start from any sound state (usually the empty state),
  \item\label{step2} we follow the proof choosing any \(\EM\) instance according to the state,
  \item if we discover that we wrongly assumed the universal disjunct of an \(\EM\) instance: 
    \begin{enumerate}
      \item we extend the state with the counterexample we found,
      \item we backtrack to a point before the \(\EM\) instance we guessed wrong,
      \item we proceed as in step \ref{step2},
    \end{enumerate}
  \item if we never discover that we wrongly assumed an universal disjunct we carry on until the end of the proof and we are done. 
\end{enumerate}

The exact point we need to backtrack to is not relevant, as long as it is before the decision of the \(\EM\) instance. 
A simple choice would be the very beginning, in which case we do not need to keep track of where we decided the \(\EM\) instance. 
In this case we only need a simple abort operator in order to formally write interactive realizers. A monadic version of this approach is given in \cite{birolo13thesis}.
A more efficient choice is to backtrack right before the decision point, so that we do not need to repeat the computations that took place before it, since they are not affected by the extension of the state. 
However this approach would require more sophisticated control operators. 

Interactive realizability can be thought as a ``smart'', albeit ``partial'', decision algorithm for negatively decidable statements. 
This can be seen comparing it with the naive algorithm given in \Cref{prog:rle}. 
It is partial because a real decision is impossible, so it only considers a finite number of instances, unlike the unbounded recursion employed by the program in \Cref{prog:rle}. 
It is smart because it does not perform a blind search, trying in order all the natural numbers. 
Instead it uses the proof itself to find the counterexamples. 
There is a reasonable expectation that the ideas underlying the proof provide a more focused way of selecting counterexamples than a blind search (this of course depends on the proof itself). 
%Moreover, if we fail to find a counterexample in the proof and if the negatively decidable statement was only used to deduce concrete instances, then we can 

Until now we considered a single instance of the \(\EM\) axiom, but little changes if there is more than one. We will return to this point later. 
In the proof of \gref{thm:least_element}, one instance of \(\EM\) is used for each inductive step in the proof. 
%  \lrea_0 \leR \lrea_1 \lor \lrea_1 \ltR \lrea_0, \lrea_1 \leR \lrea_2 \lor \lrea_2 \ltR \lrea_1, \dotsc, \lrea_{\na-1} \leR \lrea_\na \lor \lrea_\na \ltR \lrea_{\na-1}.
When we interpret the proof with the empty state, for each of these instances we assume that the universal disjunct holds. 
Therefore the proof is interpreted as follows. 
In the base step we choose \(\lrea_0\). 
In the first inductive step, we have to decide the \(\EM\) instance \( \lrea_0 \leR \lrea_1 \lor \lrea_1 \ltR \lrea_0 \). 
Since the state is empty, we assume that \(\lrea_0 \leR \lrea_1\). 
Thus we keep \(\lrea_0\) as the least element of \(\lrea_0, \lrea_1\). 
In the second inductive step, we have to decide the \(\EM\) instance \( \lrea_0 \leR \lrea_2 \lor \lrea_2 \ltR \lrea_0 \). 
Since the state is empty, we again assume that \(\lrea_0 \leR \lrea_2\). 
Thus we keep \(\lrea_0\) as the least element of \(\lrea_0, \lrea_1, \lrea_2\). 
At the end of the proof, we have assumed the following universal disjuncts:
\begin{equation}\label{eq:univ_instances}
  \lrea_0 \leR \lrea_1, \lrea_0 \leR \lrea_2, \dotsc, \lrea_0 \leR \lrea_\na. 
\end{equation}
Under these assumptions, we have found that the least element is \(\lrea_0\). 
Rather disappointing, isn't it?

The reason for this is that the universal disjuncts \(\lrea_\ia \leR \lrea_\ib\) are never instanced, so we have neither opportunity nor reason to falsify one of them. 
However this may change if \gref{thm:least_element} is used inside a bigger proof. 
This will happen later in the proof of \gref{thm:bounding_angle}. 
In this case the outer proof might instance these assumptions and discover them wrong, in which case we have to backtrack to the proof of \gref{thm:least_element}. 

Let us see how \gref{thm:least_element} behaves when its conclusion is used to deduce decidable instances. 
Assume that \(\na = 5\). 
If the state is empty, then \gref{thm:least_element} tells us that \(\lrea_0\) is a least element. 
This means that \(\lrea_0 \leR \lrea_\ia\) for any \(\ia\).
Imagine that we use \gref{thm:least_element} in a bigger proof to prove that \(\lrea_0 \leR \lrea_3\). 
This is one of the \(\EM\) instances we assumed in \eqref{eq:univ_instances}. 
Moreover, imagine that, after instantiating this assumption, we discover that \(\lrea_0 \leR \lrea_3\) does not hold at precision \(33\), that is, \(\op(\lrea_3, \lrea_0, 33)\) holds. 
Then we have to extend the domain of the state to \((0,3)\) with value \(33\). 
At this point we backtrack, say at the beginning of the proof of \gref{thm:least_element}. 

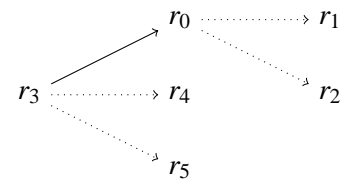
\begin{wrapfigure}{r}{0.4\textwidth}
  \vspace{-2em}
  \caption{A graph showing the result of the least element computation} \label{fig:least3}
  \begin{center}
    \begin{tikzpicture}
      \node (r3) at (0,0) {\(\lrea_3\)};
      \node (r0) at (2,1) {\(\lrea_0\)};
      \node (r4) at (2,0) {\(\lrea_4\)};
      \node (r5) at (2,-1) {\(\lrea_5\)};
      \node (r1) at (4,1) {\(\lrea_1\)};
      \node (r2) at (4,0) {\(\lrea_2\)};
      \draw [->] (r3) -- (r0);
      \draw [dotted,->] (r3) -- (r4);
      \draw [dotted,->] (r3) -- (r5);
      \draw [dotted,->] (r0) -- (r1);
      \draw [dotted,->] (r0) -- (r2);
    \end{tikzpicture}
  \end{center}
  \vspace{-1em}
  \begin{legend}
    Full arrows represent information provided by the state, dotted arrows ``guessed'' information the state knows nothing about.
  \end{legend}
  \vspace{-3mm}
\end{wrapfigure}
We again start from \(\lrea_0\) and proceed like before. 
The first and second inductive steps again select \(\lrea_0\) as the least element, assuming that \(\lrea_0 \leR \lrea_1\) and \(\lrea_0 \leR \lrea_2\). 
Things change at the third inductive step when we have to decide \(\lrea_0 \leR \lrea_3 \lor \lrea_3 \ltR \lrea_0\). 
Since now the state has a relevant witness, this time we choose the existential disjunct with witness \(33\), thus selecting \(\lrea_3\) as the new least element. 
In the next inductive steps we again assume the universal disjuncts \(\lrea_3 \leR \lrea_4\) and \(\lrea_3 \leR \lrea_5\), since the state has no information on them. 
Thus the least element is \(\lrea_3\). 
A summary of our decisions is represented in \Cref{fig:least3}.
Now imagine that we were to discover a counterexample to \(\lrea_3 \leR \lrea_2\), say at precision \(25\). 
This statement is not one of the universal disjuncts that we assumed. By looking at the proof or at \Cref{fig:least3}, we can see that it has been deduced by the semi-transitivity property from \(\lrea_3 \ltR \lrea_0\) and \(\lrea_0 \leR \lrea_2\). 
The first is the existential disjunct for which we found a witness, so we are sure that it holds. 
Thus the wrong assumption is \(\lrea_0 \leR \lrea_2\). 
By checking the proof of semi-transitivity we can see that the counterexample for \(\lrea_0 \leR \lrea_2\) is \(\max(25,33)\), thus \(33\) again. 
We extend the state accordingly and repeat the least element computation, which results in new least element \(\lrea_2\). 
In \Cref{fig:interactive_least_element} we summarize the two iterations we saw until now and add some more, as an example. 
\newcommand\stack[2][c]{{
    \renewcommand\arraystretch{0.6}
    \begin{array}{#1} 
      #2 
    \end{array}
}}
\begin{figure}[!ht]
  \caption{An example of evaluations of the interactive interpretation of \gref{thm:least_element}.} \label{fig:interactive_least_element}
  \[
    \begin{array}{|c|c|c|c|c|c|}
      \mathbf{Iter} &
      \mathbf{State} &
      \mathbf{Least\ element} & 
      \mathbf{Used} & 
      \stack{\mathbf{Deduced} \\ \mathbf{from}} & 
      \mathbf{Discovered} \\ \hline
      1st & & 
      \begin{tikzpicture}[scale=0.7,baseline=0]
        \node (r0) at (0,0) {\(\lrea_0\)};
        \node (r1) at (2,0) {\(\lrea_{1,\dotsc,5}\)};
        \draw [dotted] (r0) -- (r1);
      \end{tikzpicture}
      & 
      \lrea_0 \leR \lrea_3 & \lrea_0 \leR \lrea_3 & \lrea_3 \ltR \lrea_0 
      \\ \hline
      2nd & \lrea_3 \ltR \lrea_0
      &
      \begin{tikzpicture}[scale=0.7,baseline=-0.5]
        \node (r3) at (0,0) {\(\lrea_3\)};
        \node (r0) at (2,0.5) {\(\lrea_0\)};
        \node (r1) at (4,0.5) {\(\lrea_{1,2}\)};
        \node (r4) at (2,-0.5) {\(\lrea_{4,5}\)};
        \draw (r3) -- (r0);
        \draw [dotted] (r0) -- (r1);
        \draw [dotted] (r3) -- (r4);
      \end{tikzpicture}
      & \lrea_3 \leR \lrea_2 
      & \stack{\lrea_3 \ltR \lrea_0 \\ \lrea_0 \leR \lrea_2} 
      & \lrea_2 \ltR \lrea_0 
      \\ \hline
      3rd & \stack{\lrea_2 \ltR \lrea_0 \\ \lrea_3 \ltR \lrea_0} 
          &\begin{tikzpicture}[scale=0.7,baseline=-0.5]
      \node (r2) at (0,0) {\(\lrea_2\)};
      \node (r0) at (2,0.5) {\(\lrea_0\)};
      \node (r1) at (4,0.5) {\(\lrea_1\)};
      \node (r3) at (2,-0.5) {\(\lrea_{3,4,5}\)};
      \draw (r2) -- (r0);
      \draw [dotted] (r0) -- (r1);
      \draw [dotted] (r2) -- (r3);
    \end{tikzpicture}
    & \lrea_2 \leR \lrea_3 
    & \lrea_2 \leR \lrea_3 
    & \lrea_3 \ltR \lrea_2
    \\ \hline
    4th & \stack{\lrea_2 \ltR \lrea_0 \\ \lrea_3 \ltR \lrea_0 \\ \lrea_3 \ltR \lrea_2} 
        & \begin{tikzpicture}[scale=0.6,baseline=-0.5]
    \node (r3) at (-2,0) {\(\lrea_3\)};
    \node (r2) at (0,0.5) {\(\lrea_2\)};
    \node (r0) at (2,0.5) {\(\lrea_0\)};
    \node (r1) at (4,0.5) {\(\lrea_1\)};
    \node (r4) at (0,-0.5) {\(\lrea_{4,5}\)};
    \draw (r3) -- (r2);
    \draw (r2) -- (r0);
    \draw [dotted] (r0) -- (r1);
    \draw [dotted] (r3) -- (r4);
  \end{tikzpicture}
  & \lrea_3 \leR \lrea_1 
  & \stack{\lrea_3 \ltR \lrea_2 \\ \lrea_2 \ltR \lrea_0 \\ \lrea_0 \leR \lrea_1} 
  & \lrea_1 \ltR \lrea_0
  \\ \hline
  5th & \stack{\lrea_1 \ltR \lrea_0 \\ \lrea_2 \ltR \lrea_0 \\ \lrea_3 \ltR \lrea_0 \\ \lrea_3 \ltR \lrea_2} 
      & 
  \newcommand\safechild[1]{
      node {\(\lrea_{#1}\)}
      edge from parent[solid]
  }
  \newcommand\unsafechild[1]{
      node {\(\lrea_{#1}\)}
      edge from parent[dotted]
  }
  \begin{tikzpicture}[baseline=0,grow'=right,sibling distance=5mm,fill=white]
    \node {\(\lrea_1\)}
    child {
      node {\(\lrea_{0}\)}
      edge from parent[solid]
    }
child {
      node {\(\lrea_{2,3,4,5}\)}
      edge from parent[dotted]
    };
  \end{tikzpicture}
  & \lrea_1 \leR \lrea_4 
  & \lrea_1 \leR \lrea_4 
  & \lrea_4 \ltR \lrea_1 
  \\ \hline
  6th & \stack{\lrea_1 \ltR \lrea_0 \\ \lrea_2 \ltR \lrea_0 \\ \lrea_3 \ltR \lrea_0 \\ \lrea_3 \ltR \lrea_2 \\ \lrea_4 \ltR \lrea_1} 
      & 
  \begin{tikzpicture}[baseline=0,grow'=right,sibling distance=5mm,fill=white]
    \node {\(\lrea_4\)}
    child {
      node {\(\lrea_1\)}
      child {
        node {\(\lrea_0\)}
        edge from parent[solid]
      }
      child {
        node {\(\lrea_{2,3}\)}
        edge from parent[dotted]
      }
      edge from parent[solid]
    }
    child {
      node {\(\lrea_5\)}
      edge from parent[dotted]
    };
  \end{tikzpicture}
  & \dotso
  & \dotso
  & \dotso
  \\ \hline
\end{array}
  \]
  \begin{legend}
      {\bf Iter:} the iteration represented by the current row, 
      {\bf State:} the existential disjuncts witnessed by the state, 
      {\bf Least element:} the least element yielded by \gref{thm:least_element}, 
      {\bf Used:} a concrete consequence of \gref{thm:least_element} that is falsified in the proof, 
      {\bf Deduced from:} the premisses we deduced the falsified consequence from, 
      {\bf Discovered:} the existential assumption we found a witness of. 
  \end{legend}
\end{figure}

In general we do not use all the information in the state in each iteration: for example, in the third iteration we do not use \(\lrea_3 \ltR \lrea_0\), which we discovered in the first iteration and used in the second iteration. 
This happens because the state affects which instances of \(\EM\) are used in the proof, which should be apparent from the given iterations. 

In the iterations listed in \Cref{fig:interactive_least_element}, we compute the following sequence of least element candidates: \( \lrea_0, \lrea_3, \lrea_2, \lrea_3, \lrea_1, \lrea_4 \). 
The fact that \(\lrea_3\) appears two times may cause doubts regarding the termination of the backtracking algorithm. 
The termination of the backtracking algorithms in interactive realizability has been proven in general, see Theorem 2.15 in \cite{aschieriB10}. 

In this particular case we can understand why \(\lrea_3\) is computed two times by taking a closer look at the tree of the possible computations of the least element, 
which is shown in \Cref{fig:least_element_computation_tree}. 
For reasons of space, we only show the tree for \(\na = 3\), which is enough to see what happens up to the fifth iteration in \Cref{fig:interactive_least_element}.
\begin{figure}[!ht]
  \caption{The computation tree of the least element for \(\na = 3\)}. 
  \label{fig:least_element_computation_tree}
  \begin{center}
    \begin{tikzpicture}[level distance=10mm]
      \newcommand\leftchoice[2]{
        node {\(\lrea_{#1}\)} 
        edge from parent[dotted] 
        node[draw,left=-4mm,inner sep=1pt] 
        {\(\lrea_{#1} \leR \lrea_{#2}\)}
      }
      \newcommand\rightchoice[2]{
        node {\(\lrea_{#2}\)} 
        edge from parent[solid] 
        node[draw,right=-4mm,inner sep=1pt] 
        {\(\lrea_{#2} \ltR \lrea_{#1}\)}
      }
      \tikzstyle{level 1}=[sibling distance=6.6cm]
      \tikzstyle{level 2}=[sibling distance=3.3cm]
      \tikzstyle{level 3}=[sibling distance=2cm]
      \tikzstyle{every node}=[fill=white]
      \node {\(\lrea_0\)}
      child {
        child {
          child {
            \leftchoice{0}{3}
          }
          child {
            \rightchoice{0}{3}
          }
          \leftchoice{0}{2}
        }
        child {
          child {
            \leftchoice{2}{3}
          }
          child {
            \rightchoice{2}{3}
          }
          \rightchoice{0}{2}
        }
        \leftchoice{0}{1}
      }
      child {
        child {
          child {
            \leftchoice{1}{3}
          }
          child {
            \rightchoice{1}{3}
          }
          \leftchoice{1}{2}
        }
        child {
          child {
            \leftchoice{2}{3}
          }
          child {
            \rightchoice{2}{3}
          }
          \rightchoice{1}{2}
        }
        \rightchoice{0}{1}
      };
    \end{tikzpicture}
  \end{center}
  \begin{legend}
    Each path represents a possible computation, proceeding from root to leaf, where
    non-leaf nodes are the current least element candidates and the leaf is the final result. 
    Each branching corresponds to an \(\EM\) instance, where the left branch is taken when we guess that the universal disjunct holds for lack of information and the right branch is taken when the state contains the relevant witness. 
  \end{legend}
\end{figure}
We can see that the first five iterations in \Cref{fig:interactive_least_element} correspond to the computation paths ending with the first five leaves from the left in \Cref{fig:least_element_computation_tree}, in order. 

Moreover, from the computation tree we can see that we never perform the same computation more than once. 
Indeed, assume we have just followed a particular computation path. 
When we backtrack we increment the state adding a witness of one of the \(\EM\) instances we encountered along the path, an instance we did not have a witness for. 
This means that in the next computation, when we arrive at the node corresponding to that \(\EM\) instance, instead of taking the left path as we did previously (since the state did not have a witness for that instance), we take the right path, because this time we do have a witness (since we just extended the state with it). 
Therefore, each time we backtrack, the computation path ends with a leaf that is more to the right in \Cref{fig:least_element_computation_tree}. 
This gives a bound to the number of backtrackings, namely \(2^\na - 1\).

This is very different from what one could expect by a superficial look at the proof of \gref{thm:least_element}. 
Indeed, 
if the order on the reals were decidable, then
this simple and natural proof would be quite efficient, since its complexity would be linear in \(\na\). 
However, its interactive interpretation has exponential complexity. 
This can be seen in the computation tree too: a single computation corresponds to a path and paths have length \(\na\). 
On the other hand, since we have backtracking, in the worst case we may have to perform every possible computation. 
Naturally, since the order on the reals \emph{is} undecidable, an actual comparison is impossible. 

Moreover, while in the worst case the interactive interpretation needs a time that is exponential in \(\na\), in general it is hard to estimate the amount of backtracking that will be actually performed, for two different reasons. 

    The first one is that the actual order of \(\lrea_0, \dotsc, \lrea_\na\) affects heavily the operation of the algorithm. 
Indeed, assume that \(\lrea_0\) is the least element: the interactive interpretation only performs \(\na\) dummy comparisons and immediately returns a least element candidate that, in this case, is the actual least element, so no backtracking can ensue later. 

The second reason is that the backtracking is controlled by how the least element candidate returned by the interactive interpretation is used. 
It is possible for the interactive interpretation to return a candidate that is not a least element, but such that its use in an outer proof is does not cause backtracking. 
In other words, we only need to compute a least element candidate that is good enough instead of the correct one and this can translate to a faster computation, again depending on the situation.
This also explains why the interactive interpretation is effective even if a certainly correct least element cannot be found effectively.

%\subsection{The Whole Proof Is Relevant}

In the BHK interpretation, 
the computational content of a proof is usually much longer than the algorithm it describes. 
This can be easily seen by comparing the program in \Cref{prog:min} with the proof \gref{thm:least_element}\footnote{A straightforward formalization of the proof in the Coq proof assistant is ten times longer than \Cref{prog:min}. }. 
The reason for this discrepancy is that the proof contains both the algorithm described in \Cref{prog:min} and the evidence for its correctness. 
In general, in the computational content of a proof in the BHK interpretation we can separate the part that computes values and such (the informative computation) from the part that computes the evidence showing that the values are correct (the correctness computation). 
The correctness computation does not affect the result of the informative computation and can be safely discarded when we are only interested in algorithm extraction. 

This is not the case for the computational content in the interactive interpretation. 
Here the correctness part of the computation affects the backtracking, which affects the state, which in turn affects the informative part of computation and thus the computed values. 
Therefore, in interactive realizability both parts of the proof interact to produce the final result. 
This is apparent in the second iteration, when we falsify \(\lrea_3 \leR \lrea_2\) and we have to retrace the proof of semi-transitivity, which is a non informative proof, in order to find which \(\EM\) instance we guessed wrongly and to compute the witness that we need to extend the state. 
This shows that in the interactive interpretation we cannot forget how we proved the correctness of our computations.

\section{The Real Plane}

In this section we introduce the real plane, points, lines and some relations between them. 
We use elementary analytic geometry: points are represented by coordinates, lines by equations and proofs are mostly computations with real numbers. 

We represent a point as a pair of real numbers, its coordinates. 
Formally we can say that a point is just a natural number \(\ia\) and that there is a primitive recursive function mapping indexes into pairs of real numbers. 
As we did for real numbers, in order to improve readability we add some sugar to the notation and 
use the metavariables \(\pa, \pb, \pc, \pd\) for arithmetic terms used as indexes of points. 
When we use the index of a point both as a number and as a point, we write it as \(\ia\) in the first case and as \(\pa_\ia\) in the second. 
We write the coordinates of a point \(\pa\) as \((x_\pa, y_\pa)\) and of a point \(\pa_\ia\) as \((x_\ia, y_\ia)\). 
A line passing through two points \(\pa\pb\) is written as \(\pa\pb\).

%\subsection{Operations on Real Numbers}

\newcommand\Rzero{0_\R}
\newcommand\Rplus{+_\R}
\newcommand\Rminus{-_\R}
\newcommand\Rprod{\cdot_\R}

Before proceeding we need to introduce further infrastructure for the real numbers. 
Any rational number \(\lqa\) can be represented as a real number by taking the constant sequence of the degenerate interval \([\lqa, \lqa]\). 
Let \(\Rzero\) be the representation of the rational zero. 
We can define addition, subtraction and multiplication on the nested interval sequences by using the corresponding rational operation point-wise on the extremes. 
It is possible to retain the exponential convergence by taking a suitable sub-sequence. This can be done effectively and follows from the continuity of the operations on the rationals. 
Thus we can safely assume that we have addition, subtraction and multiplication on the reals. 

%\subsection{The Left and Right Predicates} 

\newcommand\leftP{\operatorname{\sf left}}
\newcommand\rightP{\operatorname{\sf right}}
In order to write the formal statement of \gref{thm:bounding_angle}, we need a way to determine the position of a point with respect to a line. 

First of all consider two points \(\pa\) and \(\pb\). 
We can write the equation that a point \(\pc\) has to satisfy to be on the line going through them:
\begin{equation} \label{eq:line}
  (x_\pb - x_\pa)(y_\pc - y_\pa) - (x_\pc - x_\pa)(y_\pb - y_\pa) \eqR \Rzero. 
\end{equation}
If the left-hand side is zero then \(\pc\) is on the same line with \(\pa\) and \(\pb\). 
When left-hand side is not zero, we can use its sign to distinguish which side of \(\pa\pb\) \(\pc\) is on. 
We call these sides left and right. 
We write \( \leftP (\pa, \pb, \pc) \) (resp. \( \rightP (\pa, \pb, \pc) \)) and we say that \(\pc\) is to the \emph{left} (resp. \emph{right}) of the line passing through the points \(\pa\) and \(\pb\) when 
\begin{gather*}
  \leftP(\pa, \pb, \pc) \equiv (x_\pb - x_\pa)(y_\pc - y_\pa) - (x_\pc - x_\pa)(y_\pb - y_\pa) \gtR \Rzero, \\
  \rightP(\pa, \pb, \pc) \equiv (x_\pb - x_\pa)(y_\pc - y_\pa) - (x_\pc - x_\pa)(y_\pb - y_\pa) \ltR \Rzero. 
\end{gather*}
Both \(\leftP\) and \(\rightP\) are positively decidable, since they are defined by means of \(\ltR\). 
Moreover, note that \(\pc\) is to the left of \(\pa \pb\) if and only if \(\pb\) is to the right of \(\pa\pc\). 
We say that \(\pa\) is \emph{above} \(\pb\) if \( y_\pa \geR y_\pb \) and that \(\pc\) is \emph{below} \(\pb\) when \( y_\pc \leR y_\pb \).

\section{The Geometric Part of the Proof}

Now we are ready to present the rest of the proof of the main statement. 
We divide the proof in two parts, the first given as a lemma. 
Since these proofs are more complex, for reason of readability and space 
we will not be as formal as we have been until now. 

From this point onward we assume that no three points are on the same line, formally:
\begin{equation} \label{eq:no_three_aligned}
  \qforall{\pa,\pb,\pc} \leftP (\pa, \pb, \pc) \lor \rightP (\pa, \pb, \pc). 
\end{equation}
This is a strong assumption, even more so because we require this to hold constructively: since \(\leftP\) and \(\rightP\) are \(\Sigma^0_1\) formulas defined with \(\leR\), we assume that we have an effective map that given three points yields the precision we need to reach in order to check that \(\pc\) is not on the line \(\pa\pb\). 
In other words, we are assuming that we have a procedure that effectively decides instances of the \(\leftP\) and \(\rightP\) predicates. 
The effective computation we extract uses this procedure as a parameter. 

A further consequence is that all points must be distinct: when \(x_\pa \eqR x_\pb\) and \(y_\pa \eqR y_\pb\), the left-hand side in \eqref{eq:line} is always zero for any \(\pc\).

In the next lemma the points \(\pb_0, \pb_1, \pb_2\) are three generic points, 
that is, \(\pb_\ia\) is not necessarily the point indexed by the natural number \(\ia\). 
Moreover we assume that the index \(\ia\) in \(\pb_\ia\) is interpreted up to congruence modulo 3 and thus always falls in \(\{0,1,2\}\). 
For instance, when we write \(\pb_4\), we actually mean \(\pb_1\). 
We write the coordinates of \(\pb_\ia\) as \((x_\ia,y_\ia)\), with the same conventions for the index. 
We prove that when three points are one to the left of the other with respect to a central one, one of them is necessarily lower than the central point. 
\begin{lemma}[Three points] \label{thm:three_points}
  Assume \eqref{eq:no_three_aligned} and let \(\pa, \pb_0, \pb_1\) and \(\pb_2\) be four points in the real plane such that \(\pb_{\ia+1}\) is to the left (resp. right) of \(\pa \pb_\ia\) for any \( \ia < 3\).
  Then at least one of \(\pb_0, \pb_1, \pb_2\) is strictly below \(\pa\). 
  Formally:
  \[
    \qforall{\pa, \pb_0, \pb_1, \pb_2} 
    (\qforall{\ia < 3} \leftP(\pa,\pb_\ia,\pb_{\ia+1}))
    \limply
    \qexists{\ia < 3} y_\ia \ltR y_\pa.
  \]
\end{lemma}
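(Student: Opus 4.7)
My plan is to argue by contradiction, assuming $\lnot(y_\ia <_\R y_\pa)$ for every $\ia < 3$. Since negation turns the $\Sigma^0_1$ formula $y_\ia \ltR y_\pa$ into $y_\pa \leR y_\ia$, this is just $y_\pa \leR y_\ia$ for all three indices. I would then split classically into cases according to whether each $y_\ia$ strictly exceeds $y_\pa$ or equals it. The non-collinearity hypothesis \eqref{eq:no_three_aligned}, applied to the triple $\pa, \pb_\ia, \pb_\ib$, rules out having two of the points on the horizontal line through $\pa$: if $y_\ia \eqR y_\pa \eqR y_\ib$ then the left-hand side of \eqref{eq:line} vanishes, yet \eqref{eq:no_three_aligned} forces it to be strictly positive or strictly negative. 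Hence at most one of the differences $v_\ia := y_\ia \Rminus y_\pa$ can equal $\Rzero$.

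In the strict case, all $v_\ia \gtR \Rzero$. Writing $u_\ia := x_\ia \Rminus x_\pa$, the three left-turn hypotheses unfold to
\[
  u_0 v_1 \gtR u_1 v_0, \quad u_1 v_2 \gtR u_2 v_1, \quad u_2 v_0 \gtR u_0 v_2.
\]
Multiplying the first inequality by the positive factor $v_2$ and the second by $v_0$ produces the chain $u_0 v_1 v_2 \gtR u_1 v_0 v_2 \gtR u_2 v_0 v_1$, while multiplying the third by $v_1$ yields the reversed inequality $u_2 v_0 v_1 \gtR u_0 v_1 v_2$: contradiction by asymmetry of $\ltR$. Only monotonicity of multiplication by a strictly positive real is used, so no division on the reals is required.

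In the degenerate case, after using the cyclic symmetry of both hypothesis and conclusion to relabel, assume $v_0 \eqR \Rzero$ and $v_1, v_2 \gtR \Rzero$. Then $\leftP(\pa, \pb_0, \pb_1)$ collapses to $u_0 v_1 \gtR \Rzero$, which with $v_1 \gtR \Rzero$ forces $u_0 \gtR \Rzero$. Symmetrically $\leftP(\pa, \pb_2, \pb_0)$ collapses to $-(u_0 v_2) \gtR \Rzero$, forcing $u_0 \ltR \Rzero$. Contradiction.

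I expect the main obstacle to lie in bookkeeping the classical case analyses rather than in any geometric subtlety. The strict-versus-degenerate split relies on an $\EM$ instance of the shape $y_\pa \ltR y_\ia \lor y_\pa \eqR y_\ia$ for each $\ia$ (derivable from the assumed $y_\pa \leR y_\ia$ by $\EM$ on the $\Sigma^0_1$ formula $y_\pa \ltR y_\ia$), and the ``at most one $v_\ia$ vanishes'' step appeals to \eqref{eq:no_three_aligned} via the cross-product computation, which in turn uses that $\eqR \Rzero$ is incompatible with both $\gtR \Rzero$ and $\ltR \Rzero$ (a consequence of \RuleNameOP{irrefl} and \RuleNameOP{asym}). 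These add several nested uses of $\EM$ to the elementary algebraic computation, but all of them are real-comparison instances of the very shape treated in Section 2, so they fit smoothly into the interactive realizability scheme.
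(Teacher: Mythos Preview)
Your argument is correct as a proof in \(\HA+\EM\): the case split into strict and degenerate configurations, the telescoping chain of products in the strict case, and the sign contradiction on \(u_0\) in the degenerate case all go through cleanly.

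The divergence from the paper is not in the geometry but in the logical status of the argument. The paper omits the proof, yet explicitly records that it is \emph{constructive}, remarking that therefore ``its interactive interpretation coincides with its BHK interpretation and thus is not particularly relevant for our analysis.'' Your proof, by contrast, derives the bounded existential by contradiction and then invokes further \(\EM\) instances to separate each \(y_\pa \leR y_\ia\) into \(y_\pa \ltR y_\ia\) versus \(y_\pa \eqR y_\ia\). In the paper's architecture this matters: the whole point of the surrounding analysis is that every \(\EM\) instance, and hence every backtracking point, is concentrated in the uses of totality inside \gref{thm:least_element}. \Cref{thm:three_points} is supposed to act as a purely effective subroutine that, given its \(\Sigma^0_1\) hypotheses, hands back a concrete index \(\ia\) together with a precision witnessing \(y_\ia \ltR y_\pa\), which is then fed back to falsify a guessed universal disjunct. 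If the lemma itself consumed \(\EM\), the clean separation drawn in the final two sections would no longer hold.

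A constructive route is available and is close to your own computation. Writing \(a = u_0 v_1 - u_1 v_0\), \(b = u_1 v_2 - u_2 v_1\), \(c = u_2 v_0 - u_0 v_2\), one has the algebraic identity
\[
  v_0\, b + v_1\, c + v_2\, a \eqR \Rzero,
\]
with \(a,b,c \gtR \Rzero\) by hypothesis. From \(a \gtR \Rzero\) cotransitivity of \(\ltR\) yields \(u_0 v_1 \gtR \Rzero \lor u_1 v_0 \ltR \Rzero\), so some \(v_\ia\) is apart from \(\Rzero\); if that \(v_\ia \ltR \Rzero\) we are done, and if \(v_\ia \gtR \Rzero\) the identity forces the sum of the remaining two terms to be strictly negative, whence (again by cotransitivity) one of them is, giving some \(v_\ib \ltR \Rzero\). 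No hypothesis \(v_\ia \geR \Rzero\) is ever introduced, so no \(\EM\) instance appears and the witness is produced effectively, as the paper requires.
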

% \begin{figure}[!ht]
%   \caption{The three points lemma when \(\pb_2\) is the point below \(\pa\).}
%   \label{fig:three_points}
%   \begin{center}
%     \begin{tikzpicture}[>=latex] % angle from
%       \draw[fill] (0,0) circle (0.05);
%       \node[right] at (0,0) {\(\pa\)};
%       \draw[fill] (45:2) circle (0.05);
%       \node[left] at (45:2) {\(\pb_0\)};
%       \draw (-135:0.5) -- (45:2.5);
%       \draw[fill] (140:1.5) circle (0.05);
%       \node[above] at (140:1.5) {\(\pb_1\)};
%       \draw (-40:0.5) -- (140:2);
%       \draw[fill] (240:1) circle (0.05);
%       \node[right] at (240:1) {\(\pb_2\)};
%       \draw (60:0.5) -- (240:1.5);
%       \draw[dashed] (-2,0) -- (2,0);
%     \end{tikzpicture}
%   \end{center}
% \end{figure}
We omit the proof for reasons of space. Since it is a constructive proof, its interactive interpretation coincides with its BHK interpretation and thus is not particularly relevant for our analysis. 

We can now prove the main statement. 
\begin{theorem}[Convex Angle] \label{thm:bounding_angle}
  Assume \eqref{eq:no_three_aligned}. For any \(\na \ge 2\), we can select three points \(\pa\), \(\pb\) and \(\pc\) from \(\{\pa_0, \dotsc, \pa_\na\}\) 
  such that all the remaining points fall in the angle \(\widehat{\pb\pa\pc}\), that is, all points are to the left of \(\pa\pb\) and to the right of \(\pa\pc\). 
  \begin{multline*}
    \qforall{\na\ge 2} \qexists{\ia,\ib,\ic \le \na} 
    \qforall{\id \le \na} \id \neq \ia \limply 
    (\id \neq \ib \limply \leftP (\pa_\ia, \pa_\ib, \pa_\id)) \land
    (\id \neq \ic \limply \rightP (\pa_\ia, \pa_\ic, \pa_\id)).
  \end{multline*}
\end{theorem}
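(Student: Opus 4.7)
The plan is to pick the apex $\pa$ first as a lowest point of the configuration and then carve out the two rays $\pc$ and $\pb$ by a pair of analogous inductive constructions. For the apex, apply \Cref{thm:least_element} to the $y$-coordinates $y_0, \dotsc, y_\na$ to obtain an index $\bar\ia \le \na$ with $y_{\bar\ia} \leR y_\ib$ for every $\ib \le \na$; set $\ia = \bar\ia$ and $\pa = \pa_\ia$. By construction no other point of the configuration is strictly below $\pa$, a property that drives the rest of the proof.

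Next, construct $\ic$ (and symmetrically $\ib$) by an induction on $\na' \le \na$ that mirrors the proof of \Cref{thm:least_element}. The inductive claim is: for $\na' \ge 1$ there exists $\ic \le \na'$ with $\ic \ne \bar\ia$ such that $\rightP(\pa, \pa_\ic, \pa_\id)$ for every $\id \le \na'$ with $\id \notin \{\bar\ia, \ic\}$. In the inductive step, use \eqref{eq:no_three_aligned} as a case split on $\leftP(\pa, \pa_\ic, \pa_{\na'+1})$ versus $\rightP(\pa, \pa_\ic, \pa_{\na'+1})$: in the latter case keep $\ic$, in the former case switch the candidate to $\na'+1$.

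The main obstacle is verifying, in the switch case, that all previously handled points stay on the right side of the new ray $\pa\pa_{\na'+1}$. This is where \Cref{thm:three_points} becomes essential. Suppose for contradiction that $\leftP(\pa, \pa_{\na'+1}, \pa_\id)$ holds for some previously handled $\pa_\id$. The inductive hypothesis gives $\rightP(\pa, \pa_\ic, \pa_\id)$, equivalent to $\leftP(\pa, \pa_\id, \pa_\ic)$ by the noted symmetry between the two predicates. Combined with the case assumption $\leftP(\pa, \pa_\ic, \pa_{\na'+1})$, the three points $\pa_\ic, \pa_{\na'+1}, \pa_\id$ form the cyclic ``left-of'' configuration required by \Cref{thm:three_points}, which forces one of them to be strictly below $\pa$ --- contradicting the choice of $\pa$ at the start.

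Finally, assemble the conclusion. The symmetric argument, swapping $\leftP$ and $\rightP$ throughout, yields $\ib$. For any $\id \le \na$ with $\id \notin \{\ia, \ib, \ic\}$, the two invariants give $\leftP(\pa_\ia, \pa_\ib, \pa_\id)$ and $\rightP(\pa_\ia, \pa_\ic, \pa_\id)$. The boundary cases $\id = \ib$ and $\id = \ic$ reduce respectively to $\rightP(\pa_\ia, \pa_\ic, \pa_\ib)$ and $\leftP(\pa_\ia, \pa_\ib, \pa_\ic)$, which are instances of the invariant for $\ic$ applied to $\ib$ and of the invariant for $\ib$ applied to $\ic$.
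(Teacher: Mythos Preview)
Your argument is correct and rests on the same two ingredients as the paper---choosing $\pa$ as a lowest point via \Cref{thm:least_element}, and invoking \Cref{thm:three_points} to justify the ``switch'' step---but it organises them differently. The paper runs a \emph{single} pass that maintains both candidates $\pb'$ and $\pc'$ simultaneously: at each new point $\pd$ it distinguishes four cases, and one of them ($\pd$ to the right of $\pa\pb'$ \emph{and} to the left of $\pa\pc'$) must be ruled out by a separate application of \Cref{thm:three_points}. You instead run two independent inductions, one producing $\ic$ and a mirror one producing $\ib$, and only couple them at the very end through the boundary cases $\id=\ib$ and $\id=\ic$. Your decomposition is arguably cleaner in that each induction carries a single binary case split and a single use of \Cref{thm:three_points}, so the paper's fourth-case impossibility argument never arises. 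The paper's joint iteration, by contrast, corresponds to the algorithm one would actually code (one linear scan updating two pointers) and keeps the relation ``$\pc'$ is to the left of $\pa\pb'$'' available throughout, which matters for the interactive-realizability analysis that follows. Two minor loose ends in your write-up: say what happens when the inductive step reaches $\na'+1=\bar\ia$ (simply carry $\ic$ over unchanged), and argue explicitly that $\ib\neq\ic$ at the end---this is immediate, since otherwise any third point would be both to the left and to the right of the same ray.
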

Note that the convexity of the angle \(\widehat{\pb\pa\pc}\) is assured, because we require that \(\pc\) is to the left of \(\pa\pb\) and \(\pb\) to the right of \(\pa\pc\).
\begin{proof}[Classical proof]
  Let \(\pa\) be the point with the least vertical coordinate and choose other two points \(\pb'\) and \(\pc'\), which are our candidates for \(\pb\) and \(\pc\) respectively. 
  We want all points except \(\pa\) to be to the left of \(\pa\pb\) and to the right of \(\pa\pc\). 
  If \(\pb'\) is to the left of \(\pa\pc'\), we swap \(\pb'\) and \(\pc'\). 
  Thus we know that \(\pb'\) is to the right of \(\pa\pc'\) and \(\pc'\) is to the left of \(\pa\pb'\). 

  Now consider any point \(\pd\) except \(\pa\), \(\pb'\) and \(\pc'\). 
  We have four cases:
  \begin{itemize}
    \item 
      If \(\pd\) is to the left of \(\pa \pb'\) and if it is to the right of \(\pa \pc'\), then we keep \(\pb'\) and \(\pc'\) as candidates for \(\pb\) and \(\pc\). 
    \item 
      If \(\pd\) is to the right of \(\pa \pb'\), then we choose \(\pd\) as the new candidate for \(\pb\).

      \begin{minipage}{0.65\linewidth}\indent
        Clearly \(\pb'\) is to the left of \(\pa \pd\). 
        Moreover, any other point \(\pd'\), which we already checked to be to the left of \(\pa \pb'\), is to the left of \(\pa \pd\) too. 
        This is a consequence of \eqref{eq:no_three_aligned} and \Cref{thm:three_points}. 

        Indeed, from \eqref{eq:no_three_aligned}, we know that \(\pd'\) is either to the left or to the right of \(\pa\pd\). 
        We already know that \(\pd\) is to the right of \(\pa \pb'\) and \(\pb'\) is to the right of \(\pa \pd'\).
      \end{minipage}\begin{minipage}[c]{0.35\linewidth}
        \begin{center}
          \begin{tikzpicture}[>=latex] 
            \pointAt{A}{0,0}\pa{below}
            \pointAt{B}{60:1}{\pb'}{right}
            \lineBetween{A}{B}
            \pointAt{C}{150:1.3}{\pc'}{left}
            \lineBetween{A}{C}
            \pointAt{D}{1,0.5}{\pd}{right}
            \lineBetween{A}{D}
            \pointAt{D'}{-1,1}{\pd'}{right}
            \draw[dashed] (-2,0) -- (2,0);
          \end{tikzpicture}
        \end{center}
      \end{minipage}\\
        If \(\pd'\) were to the right of \(\pa \pd\), then by \Cref{thm:three_points}, one of \(\pb'\), \(\pd\) or \(\pd'\) would have been strictly lower than \(\pa\) which would be a contradiction, since \(\pa\) is the lowest point.
        Thus \(\pd'\) is to the left of \(\pa \pd\). 
    \item 
      Symmetrically, if \(\pd\) is to the left of \(\pa \pc'\), then we choose \(\pd\) as the new candidate for \(\pc\). 
    \item 
      We show that \(\pd\) cannot be to the right of \(\pa \pb'\) and to the left of \(\pa \pc'\). 

\begin{minipage}[c]{0.65\linewidth}
      If this were the case, \(\pb'\) would be to the left of \(\pa \pd\) and \(\pd\) would be to the left of \(\pa \pc'\). 
      Since we know that \(\pc'\) is to the left of \(\pa\pb'\), by \Cref{thm:three_points}, one of \(\pd\), \(\pb'\) or \(\pc'\) would be strictly lower than \(\pa\). 
      This is a contradiction, since \(\pa\) is the lowest point by \gref{thm:least_element}. 
      \end{minipage}\begin{minipage}[c]{0.35\linewidth}
      \begin{center}
        \begin{tikzpicture}[>=latex] 
          \pointAt{A}{0,0}\pa{below}
          \pointAt{B}{30:0.7}{\pb'}{above}
          \lineBetween{A}{B}
          \pointAt{C}{160:1.2}{\pc'}{above}
          \lineBetween{A}{C}
          \pointAt{D}{-40:0.5}{\pd}{below}
          \lineBetween{A}{D}
          \draw[dashed] (-2,0) -- (2,0);
        \end{tikzpicture}
      \end{center}
\end{minipage}
  \end{itemize}
  We repeat this procedure for all the points except \(\pa\), \(\pb'\) and \(\pc'\) and we find the required points \(\pb\) and \(\pc\). 
\end{proof}

For convenience we have written the proof as an iterative algorithm. 
The proof is actually by induction on a slightly stronger version of the final statement, that adds the requirement for \(\pa\) to be lower than all the other points. 

\section{The Interactive Interpretation}

Before studying the interactive interpretation of the whole proof of \gref{thm:bounding_angle} along with its lemmas, we need to understand their computational significance. 
Thus we stop for a moment and recall some general considerations on the computational meaning of formulas in the BHK interpretation and, more specifically, in the Curry-Howard correspondence. 

As a consequence of the proofs-as-programs and formulas-as-types interpretation, 
the conclusion of a proof (that is, the statement it proves) can be thought of as the specification of the program representing the proof. 

%\subsection{Subroutines, arguments and effective computations}

In order to understand how the interactive interpretation works, it is important to distinguish computations that can be carried out effectively from those that cannot. 
Consider a proof of a statement of the form \(\qforall\lva \qexists\lvb \fa\).  
If we read it as a specification, it calls for a program that describes a function, a subroutine.
It takes a natural number as an argument named \(\lva\) and returns a pair containing a natural number \(\lvb\) and a program/proof that \(\lvb\) satisfies \(\fa\). 
All of our theorems begin with universal quantifications and implications, that is, they are specification for programs describing functions with arguments. 
Thus, in order to have an actual computation we have to provide the program with the required arguments. 

%\subsection{The Interactive Interpretation of the Whole Proof}

We can now explain the interactive interpretation of the whole proof, composed of the two lemmas and the final algorithmic proof. 
We focus on the interaction between these parts without analyzing each part in detail as we have done for \gref{thm:least_element}. 
%We begin by analyzing the single lemmas and their significance in the proof. 

We start by considering the statement of \gref{thm:bounding_angle}. 
Assume that we are given a natural number \(\na\). 
In the proof we work with the first \(\na+1\) points of the enumeration. 

The proof is an iterative procedure to select \(\pa\), \(\pb\) and \(\pc\) satisfying the following \emph{bounding condition}: 
\begin{equation} \label{eq:bounding_condition}
  \qforall{\id \le \na} \id \neq \ia \limply
  (\id \neq \ib \limply \leftP (\pa_\ia, \pa_\ib, \pa_\id)) \land
  (\id \neq \ic \limply \rightP (\pa_\ia, \pa_\ic, \pa_\id)). 
\end{equation}
The bounding condition specifies an informative computation, since \(\leftP\) and \(\rightP\) are defined by means of \(\ltR\), which is an existential quantification. 
Thus its proof computes some witnesses, namely the precision of the comparisons we need to check that the bounding condition holds. 
While we are mainly interested in the choice of the points \(\pa, \pb\) and \(\pc\) and not in the information needed to prove the bounding condition itself, 
the precision of the computation provided by \eqref{eq:bounding_condition} is actually used in interactive interpretation since it can cause backtracking. 

We claim that this bounding condition specifies an effective computation. 
First of all, the outer universal quantification is bounded, thus, in order to compute the condition, we have to compute the body of the quantification \(\na+1\) times. 
The same holds for the conjunctions. 
Thus the effectiveness of the whole condition follows from the effectiveness of the conjuncts. 
%The inequalities are atomic, so their proofs are irrelevant\footnote{This would not be the case if they were deduced by \(\EM\).}. 
The implications are effective: 
their only argument, the proof of the antecedent, is arithmetical atomic, hence irrelevant, thus the computations they specify must be constant functions. 
Therefore, we can effectively compute them by applying them to any single argument. 
Finally their consequents specify effective computations, thanks to \eqref{eq:no_three_aligned}, the assumption that no three points are on the same line. 
Thus, proofs of the bounding condition describe effective computations.

%This is very important in the interactive interpretations, since it guarantees that the correctness computation returned by \gref{thm:least_element} is going to be instanced, that is, it is going to be applied to the values needed to check the bounding condition. 
%If the correctness computation fails for one or more value, then we backtrack and compute a new lowest point \(\pa\) with the new information we discovered. 

%We can see in more detail what happens by studying the proof of \gref{thm:bounding_angle}. 
Now we can start following the proof. 
In the beginning, the lowest point \(\pa\) is selected using \gref{thm:least_element} on the vertical coordinate. 
Consider the statement of \gref{thm:least_element}:
\[
  \qforall\na \qexists{\ia \le \na} \qforall{\ib \le \na} \lrea_\ia \leR \lrea_\ib. 
\]
As a specification, it calls for a program that, given \(\na\), yields the value \(\ia\) and the correctness computation that checks that \(\ia\) is the least element. 
Since the correctness computation cannot be carried out effectively (it is negatively decidable),
the interactive interpretation computes a trivial least element the first time.
If later in the proof we happen to partially compute the correctness computation, then we may discover new information and backtrack again to the least element computation. 
%Once \(\na\) is given, the bounded quantification behaves like a conjunction and the condition is negatively decidable. 
Since \gref{thm:least_element} does not necessarily return a least element, but only a least element candidate, \(\pa\) is not the lowest point either, but just a lowest point candidate. 

The role of \Cref{thm:three_points} is to prove that some point is strictly lower than \(\pa\), thus producing a contradiction. 
In the classical proof this ensures that undesirable situations never happen. 
In the interactive interpretation however, since \(\pa\) is not necessarily the lowest point, no contradiction occurs. 
Instead, what happens is that we actually are in one of the cases we had excluded in the classical proof. 
At this point, in order to deduce the contradictory statement, we have partially computed the correctness computation returned by \gref{thm:least_element} and thus discovered which assumption was incorrectly guessed. 
We compute the relevant witness and extend the state accordingly. 
Then we compute a new lowest point candidate and continue again following the proof of \gref{thm:bounding_angle} until either we can satisfy its conclusion or we backtrack again. 

We use \Cref{thm:three_points} in two places in the proof of \gref{thm:bounding_angle}. 
The first use takes place when, while iterating on the points, we discover that the bounding condition fails for some \(\pd\) and we choose \(\pd\) as the new candidate for \(\pb\) or \(\pc\). 
We use \Cref{thm:three_points} to show that this choice satisfies the bounding condition for all the previous points we iterated over until now. 
More precisely we use \Cref{thm:three_points} to prove that, if the bounding conditions fails for \(\pd\), then one of \(\pb\), \(\pc\) or \(\pd\) is strictly lower than \(\pa\).
As we described previously, this in turn starts the backtracking. 

We also use \Cref{thm:three_points} to claim that the bounding condition cannot fail because \(\pd\) is both to the right of \(\pa\pb\) and to the left of \(\pa\pc\). 
This case was excluded completely in the classical proof, since it always leads to contradiction. 
When it occurs in the interactive interpretation, we backtrack for sure since the bounding condition cannot be satisfied. 
More precisely, in this case \Cref{thm:three_points} proves that one of \(\pb\), \(\pc\) or \(\pd\) is strictly lower than \(\pa\). 
Therefore, in order to get the contradiction, we instantiate the assumptions \(y_\pa \leR y_\pb\), \(y_\pa \leR y_\pc\) and \(y_\pa \leR y_\pd\) with enough precision to falsify at least one of them. 

As a last example, consider a situation where the state is empty and thus \(\pa\) is simply the first point in the enumeration. Assume that the points are arranged as shown in \Cref{fig:no_backtracking}.
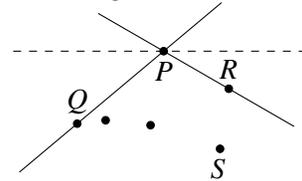
\begin{wrapfigure}{r}{5cm}
  \vspace{-6mm}
  \caption{A situation where no backtracking occurs.} \label{fig:no_backtracking}
  \vspace{3mm}
  \begin{center}
    \begin{tikzpicture}[>=latex]
      \pointAt{A}{0,0}\pa{below}
      \pointAt{B}{-140:1.5}\pb{above}
      \pointAt{C}{-30:1}\pc{above}
      \pointAt{D}{-60:1.5}\pd{below}
      \pointAt{D1}{-100:1}{}{}
      \pointAt{D2}{-130:1.2}{}{}
      \lineBetween{A}{B}
      \lineBetween{A}{C}
      \draw[dashed] (-2,0) -- (2,0);
    \end{tikzpicture}
  \end{center}
  \vspace{-6mm}
\end{wrapfigure}
Since the bounding condition is satisfied immediately, we never need to use \Cref{thm:three_points}. Thus backtracking never ensues. 
This mean that \(\pa\), while certainly not the lowest point, is a good enough candidate and we do not need another one. 
This is one of the cases we mentioned where the interactive interpretation produces a fast computation, since the lowest point is only computed once and the proof ends with no backtracking. 
This shows how the behavior of interactive interpretation of \gref{thm:least_element} depends heavily on the final statement of the proof.

%\section{Bibliography}

\bibliographystyle{eptcs}
\bibliography{realizability}

\end{document}